\newtheorem{claim}{Claim}
\newtheorem{property}{Property}
\newtheorem{theorem}{Theorem}
\newtheorem{lemma}{Lemma}
\newcommand{\ignore}[1]{}
\newcommand{\changed}[1]{{#1}}
\newcommand{\changedA}[1]{{#1}}
\newcommand{\changedB}[1]{{#1}}
\newcommand{\changedD}[1]{{#1}}
\newcommand{\fix}[1]{{\color{red}#1}}
\begin{document}
\title{Flip Distance Between Two Triangulations of a Point-Set is NP-complete}
\date{}
\author{Anna Lubiw$^*$ \and Vinayak Pathak\thanks{Cheriton School of Computer Science, University of Waterloo, Waterloo, Canada\texttt{\{alubiw,vpathak\}@uwaterloo.ca}}}

\maketitle
\begin{abstract}
Given two triangulations of a convex polygon, computing the minimum number of flips required to transform one to the other is a long-standing open problem. It is not known whether the problem is in P or NP-complete. We prove that two natural generalizations of the problem are NP-complete, namely computing the minimum number of flips between two triangulations of (1) a polygon with holes; (2) a set of points in the plane.
\end{abstract}

\section{Introduction}
\changedA{
Given a triangulation in the plane, a \emph{flip} operates on two triangles that share an edge and form a convex quadrilateral.  The flip replaces the diagonal of the convex quadrilateral by the other diagonal to form two new triangles. 
A sequence of flips can transform any triangulation to any other triangulation---this is true for triangulations of a convex polygon, and more generally for triangulations of  a polygonal region with holes, which includes the case of triangulations of a point set.

In this paper we investigate the complexity of computing the \emph{flip distance} which is the minimum number of flips to transform one triangulation to another.
This is particularly interesting for convex polygons, where the flip distance is the rotation distance 
between two binary trees (see below).

The main result of our paper is that it is NP-complete to compute the flip distance between two triangulations of a 
a polygon with holes, or a set of points in the plane.}

\subsection{Flip distance and rotation distance}
\label{sec:rotation}
Balanced binary search trees are a widely used data structure. One way to make a rooted binary search tree balanced is using an operation called a rotation~\cite{CLR01}. Despite being very simple and fundamental, the rotation operation is not completely understood. 
In particular, the complexity of the problem of computing the minimum number of rotations needed to convert one rooted and labelled binary search tree to another, 
called the rotation-distance, has been open since 1987~\cite{STT87, DHJLTZ97}. 
It is not known if the problem is NP-complete.
The problem is also closely related to Sleator and Tarjan's famous dynamic optimality conjecture~\cite{ST85}.

Rooted binary trees are of interest to researchers in the field of bioinformatics as well.  An evolutionary tree is a rooted binary tree and 
constructing an evolutionary tree that fits a given set of species given some information about their DNA sequences is a widely studied problem. 
Evaluating the effectiveness of methods of constructing evolutionary trees leads naturally to the problem of 
measuring how similar two trees are to each other. Several measures of similarity have been used including the rotation distance between the two trees~\cite{DHJLTZ97}.

There is a bijection between binary trees with $n-1$ labeled leaves and triangulations of an $n$-vertex convex polygon. Moreover, a rotation in the tree corresponds to a flip in the polygon.  Thus, computing the rotation distance between two trees is exactly equivalent to computing the flip distance between two triangulations of a convex polygon.  See~\cite{STT88}.
 
\ignore{
Consider a triangulated convex polygon and define a \emph{flip} operation as follows. Pick any two triangular faces $ABD$ and $CBD$ that share a common edge $BD$ such that $ABCD$ is a convex quadrilateral and replace edge $BC$ with edge $AD$. This operation is called a flip.}
\ignore{
It is well known (see~\cite{STT88}) that computing the rotation distance between two trees is exactly equivalent to computing the 
flip distance between two triangulations between a convex polygon.

The two problems stated above are equivalent because there is a bijection between binary trees with $n-1$ labeled leaves and triangulations of an $n$-vertex convex polygon. Moreover, the triangulations that are one flip apart correspond to binary trees that are one rotation apart.}

\subsection{Generalizations and related work}
\label{sec:previous-work}
Flip distance between triangulations of a convex polygon and rotation distance between binary trees have been well studied in the past. Several results deal with the combinatorics of the flip operation. Sleator et al.~\cite{STT88} proved that for large values of $n$, the flip distance between two triangulations of an $n$-gon is at most $2n-10$ and that occasionally $2n-10$ flips are necessary. Other results deal with the \emph{flip graph}, i.e., the graph where nodes correspond to triangulations of an $n$-gon and an edge between two nodes denotes the fact that the corresponding triangulations are one flip apart. For example, Lucas~\cite{Luc88} showed that the flip graph is hamiltonian.  See also Eppstein~\cite{Epp}.

Flips have been studied in more general settings as well. Dyn et al.~\cite{DGR93} proved that any two triangulations of a simple polygon can be transformed into one another using flips. They proved that the same holds even for two triangulations of a simple polygon with points inside it. \ignore{The latter is also known in the literature as a \emph{straight line embedding of a near-triangulation}, or simply a \emph{near-triangulation}. This is because the graph whose nodes correspond to the points and edges correspond to the edges of the triangulation and of the polygon is a maximal planar graph (also called a triangulation) where one particular face is not necessarily a triangle.} Lawson~\cite{Law77} proved an upper bound of $O(n^2)$ flips needed in any such flip sequence. Hurtado et al.~\cite{HNU99} proved that the bound is tight asymptotically.

Triangulations of polygons with interior points have been further generalized to triangulations of simple polygons with polygonal holes, also called polygonal regions. It is known that two triangulations of the same polygonal region can be transformed into each other using flips (see~\cite{OB08}). 
Note that a one-vertex polygonal hole is just a point. Thus triangulations of polygons with interior points are a special case.

Flips have also been studied in a more combinatorial setting. For example, given a maximal planar graph, we can define a flip as replacing an edge with another so that the resulting graph is also maximal planar. Wagner~\cite{Wag36} proved that given two maximal planar graphs $G_1$ and $G_2$, there always exists a sequence of edge flips that transforms $G_1$ into a graph isomorphic to $G_2$. Combinatorial bounds on the number of flips required have also been studied and the best known upper bound is by Bose et al.~\cite{BJRSV11} of $5.2n-24.4$. 

In the combinatorial setting, we have the choice of labelled vs.~unlabelled graphs. 
Sleator et al.~\cite{STT92} proved that $O(n\log n)$ flips are sufficient to transform one labelled maximal planar graph with $n$ vertices into another with the same vertices, and $\Omega(n\log n)$ flips are sometimes necessary.
Needless to say, this is a huge area with numerous directions of investigation. Bose and Hurtado~\cite{BH09} provide a survey.

Regarding the question of actually computing the flip distance, to the best of our knowledge, only triangulations of convex polygons have been studied and the question has been open since 1987~\cite{STT87, DHJLTZ97}. The best known result is a trivial factor-2 approximation algorithm, which can be improved under certain assumptions regarding the input~\cite{LZ98}. Recently it was proved that the problem is fixed parameter tractable in the flip distance~\cite{CJ09}. No hardness results are known either.

\ignore{
\section{Our contribution}
We propose that to tackle the long standing open problem about the complexity of flip distance between triangulations of a convex polygon, we first look at some more general problems. We show that computing the flip distance between two triangulations of a simple polygon with interior points is NP-complete. 

We show this in two steps. First we prove that the problem of computing the flip distance for triangulations of polygonal regions is NP-complete. Next we reduce it to the same problem on triangulations of simple polygons with interior points. (For definitions, please refer to Sections~\ref{sec:convex-polygon} and \ref{sec:previous-work}.)
}

\section{Triangulations of polygonal regions}

\begin{theorem}
\changed{The following problem is NP-complete: Given two triangulations of a polygonal region with holes and a number $k$, is the flip distance between the two triangulations at most $k$?}
\end{theorem}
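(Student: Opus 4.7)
The plan is to handle membership and hardness separately. NP membership is straightforward: by Lawson's $O(n^2)$ upper bound on the flip distance between any two triangulations of a polygonal region (cited earlier), the minimum flip distance is polynomially bounded, so a flip sequence of length at most $k$ is a polynomial-size certificate that is verified in polynomial time by simulating each flip and checking the convex-quadrilateral condition.

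For NP-hardness, I would reduce from a known NP-complete problem whose combinatorial structure matches the ``local binary choices'' available inside a polygonal region with holes. Natural source problems are Rectilinear Steiner Tree or a planar variant of 3-SAT. The overarching strategy exploits one observation: a hole in the region acts as an obstruction that forces the edges of any triangulation in its vicinity to route either ``above'' or ``below'' the hole, giving a local Boolean choice. I would bundle such choices into variable gadgets whose two canonical triangulations correspond to \emph{true} and \emph{false}, wire gadgets (strips of aligned holes that propagate these choices), and clause gadgets that can be transformed between the source and target triangulations cheaply if and only if at least one incident variable is set to a satisfying value. The source and target triangulations of the full region, together with the budget $k$, would then be chosen so that the flip distance is at most $k$ precisely when the formula is satisfiable.

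Once the gadget layout is fixed, the upper bound is the easier half: from a satisfying assignment I would exhibit an explicit flip sequence of length $k$ by handling each gadget in turn, using the satisfied literal inside every clause gadget to realize the required local transformation with exactly the allotted number of flips, and using wires with no wasted flips thanks to the propagation property. The main obstacle, and what I expect to dominate the proof, is the matching lower bound. I would need a charging argument---likely based on counting ``necessary'' flips of diagonals that separate a hole from another hole or from the outer boundary, in the spirit of the Sleator--Tarjan--Thurston necessary-edge technique---to argue that any flip sequence of length at most $k$ must induce a consistent assignment across the variable gadgets. The subtlety is that a flip sequence could, in principle, interleave progress in different gadgets and share flips between them; ruling this out requires designing the gadgets so that flips in different gadgets act on disjoint sets of edges, which is typically achieved by padding each gadget with enough holes that the convex-quadrilateral condition forbids any ``shortcut'' flip from straddling two gadgets. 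Getting this interference-free property simultaneously with the tight budget is where I expect the bulk of the technical work to lie.
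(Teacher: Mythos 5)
Your NP-membership argument is fine and matches the paper's (the flip distance between two triangulations of a polygonal region is polynomially bounded, so a flip sequence is a verifiable certificate). But the hardness half of your proposal is a plan rather than a proof, and the part you defer---``the matching lower bound \ldots is where I expect the bulk of the technical work to lie''---is precisely the content of the theorem. You never exhibit a single concrete gadget, and more importantly you never establish the one thing every reduction of this kind needs: a local structure whose \emph{flip distance} (not merely whose combinatorial configuration) provably differs by a controlled amount depending on a Boolean choice made elsewhere. Saying that a hole forces edges to route ``above'' or ``below'' it gives you a way to encode a bit in a single triangulation, but flip distance is a property of a \emph{pair} of triangulations, and you have no lemma converting that routing choice into a quantitative lower bound on the number of flips needed to get from the source to the target triangulation. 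Without such a lemma the clause gadgets, the budget $k$, and the charging argument are all unspecifiable.

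For comparison, the paper reduces from vertex cover on 3-connected cubic planar graphs rather than from planar SAT, and its entire technical core is exactly the quantitative lemma you are missing: a ``channel'' (two 7-vertex reflex chains) requires at least $36$ flips to go from left-inclined to right-inclined, but only $24$ if an extra ``cap'' vertex visible to the whole channel is available; both bounds come from a counting argument that tracks how the apex of each triangle must migrate along the opposite chain, with each flip advancing at most two apexes by one step. Vertex gadgets hide the cap behind a ``lock'' diagonal, so that spending $2$ flips at a vertex saves $8$ flips on each incident channel, which is what ties the flip distance to $2k + 28|E|$ for a vertex cover of size $k$. The interference problem you correctly worry about is handled there by arranging that the edges blocking distinct channels from their caps are pairwise disjoint except for the shared lock edge. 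So your high-level architecture (gadgets, budget, charging lower bound, non-interference) is the right shape, but the proposal as written omits the specific geometric constructions and counting arguments that make any such reduction go through, and those cannot be filled in routinely.
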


\subsection{Proof idea}
\label{sec:proof-idea}
Note that the problem lies in NP.
We prove hardness by giving a polynomial time reduction from vertex cover on 3-connected cubic planar graphs, which is known to be NP-complete~\cite{BKM97, WAN79}.

The idea is to take a planar straight-line drawing of the graph and create a polygonal region by replacing each edge by a ``channel'' and each vertex by a  ``vertex gadget''.   We then construct  two triangulations of the polygonal region that differ on the channels, and show that a short flip sequence corresponds to a small vertex cover in the original graph.

We begin by describing channels and their triangulations, because this gives the intuition for the proof.  A \emph{channel} is a polygon that consists of two 7-vertex reflex chains joined by two \emph{end} edges, as shown in Figures~\ref{edge-gadget-init} and \ref{edge-gadget-final}.  Note that every vertex on the upper reflex chain sees every vertex on the lower reflex chain and vice versa.  We identify two triangulations of a channel: a \emph{left-inclined triangulation} as shown in Figure~\ref{edge-gadget-init}; and a \emph{right-inclined triangulation} as shown in Figure~\ref{edge-gadget-final}.

A channel is the special case $n=7$ of the polygons $H_n$ of Hurtado et al.~\cite{HNU99}.
They prove in Theorem 3.8 that the flip distance between the right-inclined and left-inclined triangulations of $H_n$ is $(n-1)^2$.  
We include a different  proof in order  to generalize:
 
\changedD{
\begin{property}
\label{prop:channel}
Transforming a left-inclined triangulation of a channel to a right-inclined triangulation takes at least 36 flips.
\end{property}
\begin{proof}
In any triangulation of a channel, each edge of the upper reflex chain is in a triangle whose apex lies on the bottom reflex chain.  This apex must move from lower right ($B_7$) to lower left ($B_1$), in order to transform the left-inclined triangulation to the right-inclined triangulation.  
Similarly, each edge of the lower reflex chain is in a triangle whose apex lies on the upper reflex chain, and must move from upper left to upper right.
However, one flip can only involve one edge of the upper chain and one edge of the lower chain (no other 4 vertices form a convex quadrilateral), and thus can only move one upper and one lower apex, and only by one vertex along the chain.
Twelve triangles times six apex moves per triangle divided by two apex moves per flip gives a lower bound of 36 flips. 
\end{proof} 
 }
\ignore{
\begin{proof} 
\cite{HNU99}
Consider the triangles in Figures~\ref{edge-gadget-init} and \ref{edge-gadget-final}. Since the upper chain and the lower chain are both reflex, a triangle cannot have all three vertices on the same chain. Thus
there are two types of triangles: 
those formed by a vertex on the upper chain and an edge on the lower chain, 
which we label 0,  and those formed by an edge  on the upper chain and a vertex on the lower chain, which we label 1. 
Listing the triangle labels from left to right  along a horizontal line through the middle of the channel gives the binary string 
000000111111 for Figure~\ref{edge-gadget-init} and 111111000000 for Figure~\ref{edge-gadget-final}. Now, an edge can be flipped if and only if the quadrilateral containing it is convex, which happens if and only if the triangles on its two sides are of different types. In terms of the binary strings, this means that we can only change a `01' to a `10' and vice versa. This gives us a lower bound of $6\times 6 = 36$ flips. \changedB{Note that there is a flip sequence of that length.}
\end{proof}
}

\ignore{
\begin{property}
\label{prop:channel}
The flip distance between a left-inclined and right-inclined triangulation of a channel is 36.
\end{property}
\begin{proof}
 
\cite{HNU99}
Consider the triangles in Figures~\ref{edge-gadget-init} and \ref{edge-gadget-final}. Since the upper chain and the lower chain are both reflex, a triangle cannot have all three of its vertices on the same chain. 
Thus we can only have two different types of triangles: those that have one vertex on the upper chain and two on the lower chain and those that have two vertices on the upper chain and one on the lower chain. Let us assign the number 0 to the former and 1 to the latter. If we consider the triangles from left to right, this gives us the binary string 000000111111 for Figure~\ref{edge-gadget-init} and 111111000000 for Figure~\ref{edge-gadget-final}. Now, an edge can be flipped if and only if the quadrilateral it is contained in is convex, which happens if and only if the triangles on its two sides are of different types. In terms of the binary strings, this means that we can only change a `01' to a `10' and vice versa. This gives us a lower bound of $6\times 6 = 36$ flips. 
\end{proof}
}

We now show that the number of flips goes down if a channel has a {\it cap}, an extra vertex that is visible to all the channel vertices, as shown in Figure 1(c).

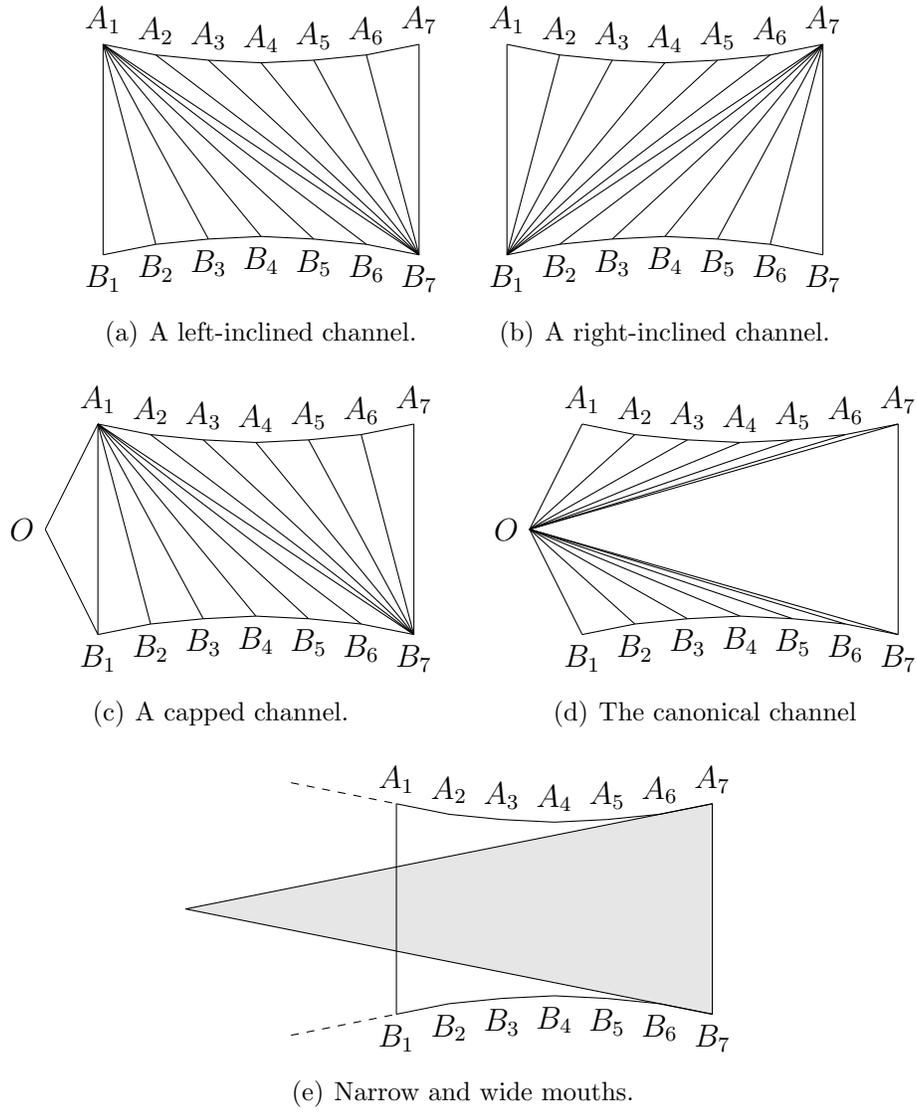
\begin{figure}
\centering
\subfigure[A left-inclined channel.]{
\label{edge-gadget-init}
\begin{tikzpicture}[scale=0.7]
\draw (-3,2) node[above]{$A_1$} --(-2,1.8); 
\draw (-2,1.8) node[above]{$A_2$} --(-1,1.7);
\draw (-1,1.7) node[above]{$A_3$} --(0,1.65);

\draw (0,1.65) node[above]{$A_4$} --(1,1.7);
\draw (1,1.7) node[above]{$A_5$} --(2,1.8);
\draw (2,1.8) node[above]{$A_6$} --(3,2) node[above]{$A_7$};

\draw (-3,-2) node[below]{$B_1$} --(-2,-1.8);
\draw (-2,-1.8) node[below]{$B_2$} --(-1,-1.7);
\draw (-1,-1.7) node[below]{$B_3$} --(0,-1.65);
\draw (0,-1.65) node[below]{$B_4$} --(1,-1.7);
\draw (1,-1.7) node[below]{$B_5$} --(2,-1.8);
\draw (2,-1.8) node[below]{$B_6$} --(3,-2) node[below]{$B_7$};

\draw (-3,2)--(-3,-2);
\draw (-3,2)--(-2,-1.8);
\draw (-3,2)--(-1,-1.7);
\draw (-3,2)--(0,-1.65);
\draw (-3,2)--(1,-1.7);
\draw (-3,2)--(2,-1.8);
\draw (-3,2)--(3,-2);

\draw (3,-2)--(-2,1.8);
\draw (3,-2)--(-1,1.7);
\draw (3,-2)--(0,1.65);
\draw (3,-2)--(1,1.7);
\draw (3,-2)--(2,1.8);
\draw (3,-2)--(3,2);

\end{tikzpicture}    
}                
\subfigure[A right-inclined channel.]{
\label{edge-gadget-final}
\begin{tikzpicture}[scale=0.7]
\draw (-3,2) node[above]{$A_1$} --(-2,1.8); 
\draw (-2,1.8) node[above]{$A_2$} --(-1,1.7);
\draw (-1,1.7) node[above]{$A_3$} --(0,1.65);
\draw (0,1.65) node[above]{$A_4$} --(1,1.7);
\draw (1,1.7) node[above]{$A_5$} --(2,1.8);
\draw (2,1.8) node[above]{$A_6$} --(3,2) node[above]{$A_7$};

\draw (-3,-2) node[below]{$B_1$} --(-2,-1.8);
\draw (-2,-1.8) node[below]{$B_2$} --(-1,-1.7);
\draw (-1,-1.7) node[below]{$B_3$} --(0,-1.65);
\draw (0,-1.65) node[below]{$B_4$} --(1,-1.7);
\draw (1,-1.7) node[below]{$B_5$} --(2,-1.8);
\draw (2,-1.8) node[below]{$B_6$} --(3,-2) node[below]{$B_7$};

\draw (-3,-2)--(-3,2);
\draw (-3,-2)--(-2,1.8);
\draw (-3,-2)--(-1,1.7);
\draw (-3,-2)--(0,1.65);
\draw (-3,-2)--(1,1.7);
\draw (-3,-2)--(2,1.8);
\draw (-3,-2)--(3,2);

\draw (3,2)--(-2,-1.8);
\draw (3,2)--(-1,-1.7);
\draw (3,2)--(0,-1.65);
\draw (3,2)--(1,-1.7);
\draw (3,2)--(2,-1.8);
\draw (3,2)--(3,-2);
\end{tikzpicture}
}
\subfigure[A capped channel.]{
\label{fedge-gadget-init}
\begin{tikzpicture}[scale=0.7]
\draw (-4, 0) node[left]{$O$} --(-3, 2);
\draw (-4, 0)--(-3, -2);

\draw (-3,2) node[above]{$A_1$} --(-2,1.8); 
\draw (-2,1.8) node[above]{$A_2$} --(-1,1.7);
\draw (-1,1.7) node[above]{$A_3$} --(0,1.65);
\draw (0,1.65) node[above]{$A_4$} --(1,1.7);
\draw (1,1.7) node[above]{$A_5$} --(2,1.8);
\draw (2,1.8) node[above]{$A_6$} --(3,2) node[above]{$A_7$};

\draw (-3,-2) node[below]{$B_1$} --(-2,-1.8);
\draw (-2,-1.8) node[below]{$B_2$} --(-1,-1.7);
\draw (-1,-1.7) node[below]{$B_3$} --(0,-1.65);
\draw (0,-1.65) node[below]{$B_4$} --(1,-1.7);
\draw (1,-1.7) node[below]{$B_5$} --(2,-1.8);
\draw (2,-1.8) node[below]{$B_6$} --(3,-2) node[below]{$B_7$};

\draw (-3,2)--(-3,-2);
\draw (-3,2)--(-2,-1.8);
\draw (-3,2)--(-1,-1.7);
\draw (-3,2)--(0,-1.65);
\draw (-3,2)--(1,-1.7);
\draw (-3,2)--(2,-1.8);
\draw (-3,2)--(3,-2);

\draw (3,-2)--(-2,1.8);
\draw (3,-2)--(-1,1.7);
\draw (3,-2)--(0,1.65);
\draw (3,-2)--(1,1.7);
\draw (3,-2)--(2,1.8);
\draw (3,-2)--(3,2);

\end{tikzpicture}    
}                
\ignore{\subfigure[A right-inclined capped channel.]{
\label{fedge-gadget-final}
\begin{tikzpicture}[scale=0.7]
\draw (-4, 0) node[left]{$O$} --(-3, 2);
\draw (-4, 0)--(-3, -2);

\draw (-3,2) node[above]{$A_1$} --(-2,1.8); 
\draw (-2,1.8) node[above]{$A_2$} --(-1,1.7);
\draw (-1,1.7) node[above]{$A_3$} --(0,1.65);
\draw (0,1.65) node[above]{$A_4$} --(1,1.7);
\draw (1,1.7) node[above]{$A_5$} --(2,1.8);
\draw (2,1.8) node[above]{$A_6$} --(3,2) node[above]{$A_7$};

\draw (-3,-2) node[below]{$B_1$} --(-2,-1.8);
\draw (-2,-1.8) node[below]{$B_2$} --(-1,-1.7);
\draw (-1,-1.7) node[below]{$B_3$} --(0,-1.65);
\draw (0,-1.65) node[below]{$B_4$} --(1,-1.7);
\draw (1,-1.7) node[below]{$B_5$} --(2,-1.8);
\draw (2,-1.8) node[below]{$B_6$} --(3,-2) node[below]{$B_7$};

\draw (-3,-2)--(-3,2);
\draw (-3,-2)--(-2,1.8);
\draw (-3,-2)--(-1,1.7);
\draw (-3,-2)--(0,1.65);
\draw (-3,-2)--(1,1.7);
\draw (-3,-2)--(2,1.8);
\draw (-3,-2)--(3,2);

\draw (3,2)--(-2,-1.8);
\draw (3,2)--(-1,-1.7);
\draw (3,2)--(0,-1.65);
\draw (3,2)--(1,-1.7);
\draw (3,2)--(2,-1.8);
\draw (3,2)--(3,-2);
\end{tikzpicture}
}
}
\subfigure[The canonical channel]{
\label{fedge-gadget-canonical}
\begin{tikzpicture}[scale=0.7]
\draw (-4, 0) node[left]{$O$} --(-3, 2);
\draw (-4, 0)--(-3, -2);

\draw (-3,2) node[above]{$A_1$} --(-2,1.8); 
\draw (-2,1.8) node[above]{$A_2$} --(-1,1.7);
\draw (-1,1.7) node[above]{$A_3$} --(0,1.65);
\draw (0,1.65) node[above]{$A_4$} --(1,1.7);
\draw (1,1.7) node[above]{$A_5$} --(2,1.8);
\draw (2,1.8) node[above]{$A_6$} --(3,2) node[above]{$A_7$};

\draw (-3,-2) node[below]{$B_1$} --(-2,-1.8);
\draw (-2,-1.8) node[below]{$B_2$} --(-1,-1.7);
\draw (-1,-1.7) node[below]{$B_3$} --(0,-1.65);
\draw (0,-1.65) node[below]{$B_4$} --(1,-1.7);
\draw (1,-1.7) node[below]{$B_5$} --(2,-1.8);
\draw (2,-1.8) node[below]{$B_6$} --(3,-2) node[below]{$B_7$};

\draw (-4, 0)--(-2, 1.8);
\draw (-4, 0)--(-1, 1.7);
\draw (-4, 0)--(0, 1.65);
\draw (-4, 0)--(1, 1.7);
\draw (-4, 0)--(2, 1.8);
\draw (-4, 0)--(3, 2);

\draw (-4, 0)--(-2, -1.8);
\draw (-4, 0)--(-1, -1.7);
\draw (-4, 0)--(0, -1.65);
\draw (-4, 0)--(1, -1.7);
\draw (-4, 0)--(2, -1.8);
\draw (-4, 0)--(3, -2);

\draw (3,2)--(3,-2);
\end{tikzpicture}
}
\subfigure[Narrow and wide mouths.]{
\label{channel-feasible-region}
\begin{tikzpicture}[scale=0.7]
\path[draw, fill=gray!20] (-7, 0) -- (3, 2) -- (3, -2) -- (-7, 0);

\draw (-3, 2) -- (-3, -2);

\draw (-3,2) node[above]{$A_1$} --(-2,1.8); 
\draw (-2,1.8) node[above]{$A_2$} --(-1,1.7);
\draw (-1,1.7) node[above]{$A_3$} --(0,1.65);
\draw (0,1.65) node[above]{$A_4$} --(1,1.7);
\draw (1,1.7) node[above]{$A_5$} --(2,1.8);
\draw (2,1.8) node[above]{$A_6$} --(3,2) node[above]{$A_7$};

\draw (-3,-2) node[below]{$B_1$} --(-2,-1.8);
\draw (-2,-1.8) node[below]{$B_2$} --(-1,-1.7);
\draw (-1,-1.7) node[below]{$B_3$} --(0,-1.65);
\draw (0,-1.65) node[below]{$B_4$} --(1,-1.7);
\draw (1,-1.7) node[below]{$B_5$} --(2,-1.8);
\draw (2,-1.8) node[below]{$B_6$} --(3,-2) node[below]{$B_7$};

\draw[dashed] (-5, 2.4) -- (-3,2);
\draw[dashed] (-5, -2.4) -- (-3, -2);
\end{tikzpicture}
}

\caption{Channels}
\label{edge-gadget}
\end{figure}

\begin{property}
The flip distance from a left-inclined to a  right-inclined triangulation of a capped channel is 24.
\label{prop:capped-channel}
\end{property}
\begin{proof}
The ``canonical'' triangulation shown in Figure~\ref{fedge-gadget-canonical} is 12 flips away from both the left-inclined and the right-inclined triangulations of a capped channel:
To flip the left-inclined triangulation to the canonical triangulation, flip edges $A_1B_1,\ldots , A_1B_7$ followed by edges $A_2B_7,\ldots , A_6B_7$ in that order. Similarly for the right-inclined triangulation. 

\changedD{
For the lower bound, we follow the same  idea as above.  In any triangulation, each edge of the upper [lower] reflex chain is in a triangle whose apex is either the cap or a vertex of the lower [upper] chain.  
There are only two kinds of flips: (1) a flip involving the cap vertex, an edge of one chain, and a vertex of the other chain; and (2) a flip involving one edge of each chain.   A flip of type (1) moves the apex of only one triangle, and moves the apex to or from the cap.  
If a triangle is altered by a flip of type (1) then at least two such flips are required, one to move the apex to the cap and one to move the apex from the cap. 
If a triangle is only altered by flips of type (2), then, as above, it costs 3 flips to get the apex to its destination.  Thus the 12 triangles require at least 24 flips.
}
\end{proof}

With the description of channels in place, we now elaborate on the idea of our reduction.  
We create a polygonal region by replacing each edge in the planar drawing by a channel, and each vertex by a vertex gadget.
We make two triangulations of the polygonal region.  In triangulation $T_1$ all edge channels are left-inclined and in $T_2$ all edge channels are right-inclined.  The triangulations are otherwise identical.
We design vertex gadgets so that
making a few flips in a vertex gadget creates a cap for a channel connected to it. Since transforming a channel from left-inclined to right-inclined is 
less costly if it is capped, the minimum flip sequence that transforms all the channels 
is obtained by
choosing the smallest set of vertices that covers all the edges and using them to cap all the channels. Thus, intuitively, a
minimum flip sequence 
corresponds to a
minimum vertex cover. 

\changed{One complication is that we cannot construct a vertex gadget for a \emph{sharp} vertex---a vertex of degree 3 where one of the three incident angles in the planar drawing is $ \ge \pi$. 
Therefore, we first show how to eliminate sharp vertices.}
\changedB{Let $G$ be our given 3-connected cubic planar graph. 
Using a result of Rote~\cite{Rote}, we can find, in polynomial time, 
a {\it strictly convex} drawing of $G$ on a polynomial-sized grid.  (In fact, Tutte's algorithm would also suffice for our purposes.)
 {\it Strictly convex} means that each face is a strictly convex polygon.  Thus the only sharp vertices of this drawing are the vertices of the outer face.  We replace each sharp vertex $v$ by a 3-vertex chain $v_1, v_2, v_3$ as shown in Figure~\ref{figure:sharp}.  We claim that $G$ has a vertex cover of size $\le k$ if and only if the modified graph has a vertex cover of size $\le k + t$, where $t$ is the number of vertices on the outer face of $G$.  This is because there are only two ways to cover the edges $(v_1, v_2)$ and $(v_2, v_3)$: $\{ v_1, v_3\}$, which corresponds to $v$ being in the vertex cover of $G$;  or $\{v_2\}$, which corresponds to $v$ not being in the vertex cover of $G$.}
 
 \begin{figure}[htb]
\centering
\includegraphics[scale=1]{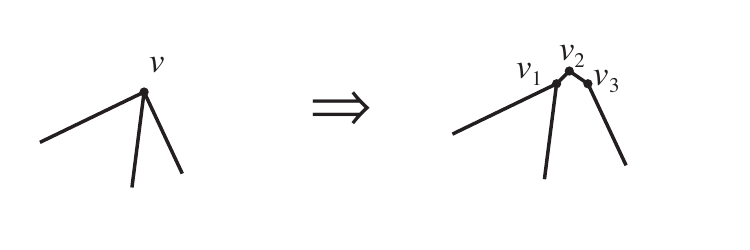}
\caption{Eliminating sharp vertices}
\label{figure:sharp}
\end{figure}

\ignore{
\changed{
Let $G$ be our given 3-connected cubic planar graph.  $G$ has a unique planar embedding, and has a face of at most 5 edges, by applying Euler's Theorem to the planar dual, which is a triangulated planar graph.  
Using a result of Rote~\cite{Rote}, we can find, in polynomial time, 
a {\it strictly convex} drawing of $G$ on a polynomial-sized grid with at most 5 vertices on the outer face.   {\it Strictly convex} means that each face is a strictly convex polygon.  Thus the only sharp vertices of this drawing are the $\le 5$ vertices of the outer face.

Let $G_O$ be the vertices and edges of the outer face.  Any vertex cover of $G$ must include a vertex cover $X$ of $G_O$.
There are a constant number of choices for $X$, and we can try each one.
To find a minimum vertex cover containing $X$ we need to find a minimum vertex cover of the graph $G_X$ formed by
deleting vertices $X$ and all edges incident to $X$.
\changedA{$G$ has a vertex cover of size $\le k$ iff for some choice of $X$, $G_X$ has a vertex cover of size $\le k - |X|$.}
Observe that $G_X$, with the drawing inherited from $G$, has no sharp vertices, because any sharp vertex outside $X$ has its degree reduced.
} 

It remains to carry out our reduction on the graph $G_X$, which is drawn in the plane without sharp vertices.  
We perform one further simplification.  If $G_X$ has a vertex $v$ of degree 1, with neighbour $u$ say, then we can assume without loss of generality that $u$ is in the vertex cover, and can therefore remove $v$, $u$, and all the edges incident to $u$. The original graph has a vertex cover of size $\le k$ iff the reduced graph has a vertex cover of size $\le k-1$. 
}

\ignore{ 
Given a cubic 3-connected planar graph $G$, we first embed it on a polynomial sized grid on the plane such that the outer face has at most 6 vertices and all internal faces are strictly convex using the result from~\cite{}. This implies that no internal vertices are sharp. In fact, because the embedding is on a polynomial sized grid, the maximum angle between any two edges meeting at an internal vertex will be at most $\pi - O(p(n))$ where $p(n)$ is some polynomial in $n$. 

\ignore{Using this embedding, we create a constant number of new graphs $G_i$'s and their respective embeddings $P_i$'s such that the size of the minimum vertex cover for $G$ can be computed in polynomial time from the sizes of the minimum vertex covers for $G_i$'s. Finally, we show that the vertex cover problem on each $G_i$ can be solved by solving the flip distance problem on a corresponding pair of triantulations obtained using the embedding $P_i$.}

Let $S$ be the set of vertices in the outer face. Let $S'$ be a vertex cover of $G[S]$. Assuming $\mbox{VC}(G)\cap S = S'$, we can compute VC$(G)$ by computing the minimum vertex cover of a graph that has no vertices from the set $S$, and thus no sharp vertices, as follows.

For all $v\in S'$, remove $N_G(v)$ from $G$. Let the new graph be called $G'$. We claim that $G'$ does not contain any vertices from $S$. Assume, for contradiction, that there exists $u\in S\backslash S'$ such that $u\in G'$. Let $u'$ be the vertex that comes right before and $u''$ the vertex that comes right after $u$ when we go around the external face formed by $S$ in the embedding of $G$. Since $u\in G'$, therefore $u', u''\notin S'$. This means the edges $(u,u')$ and $(u, u'')$ are not covered and hence $S'$ is not a vertex cover of $G[S]$.

It is clear that under the assumption that $\mbox{VC}(G)\cap S = S'$, we have $\mbox{VC}(G) = \mbox{VC}(G')\cap S'$. Thus to compute VC$(G)$, we just need to find the best choice for $S'$. Since $|S| = 6$, there are only a constant number of choices for $S'$. Each choice gives us a graph $G'$ and an embedding $P'$ which mimics the original embedding except for the removed vertices. In addition, each $P'$ has the property that all its degree-3 vertices are non-sharp and all degree-2 vertices have their edges meeting at an angle at most $\pi - O(p(n))$.

The last step of our reduction is to show that each VC$(G')$ can be computed by computing the flip distance between two triangulations $T_1$ and $T_2$ of the same polygonal region based on the idea discussed before. \ignore{The basic idea is to replace each edge $(v_i, v_j)$ in $P'$ with a channel whose two ends point towards $v_i$ and $v_j$ respectively and replace each vertex with a gadget depending on its degree. In $T_1$, we use left-inclined triangulations for the channels and in $T_2$ we use right-inclined triangulations. The vertex gadgets used are the same in both $T_1$ and $T_2$. They have the property that a small number of flips in them creates a cap for one of the channels it is connected to. Since flipping a left-inclined channel to a right-inclined one is much easier if it has a cap at one end or the other, the problem of flipping $T_1$ to $T_2$ is intuitively the same as finding the smallest number of vertices that cover all the edges.}
} 

\subsection{Details of the reduction}
\label{sec:details}
For the remainder of the proof we will assume that we have a graph $G$ with vertices of degree 2 and 3, and a straight-line planar drawing $\Gamma$ of the graph on a polynomial sized grid with no sharp vertices.

For each channel, we define its \emph{narrow mouth} (the shaded region) and \emph{wide mouth} (the dotted lines) as shown in Figure~\ref{channel-feasible-region}.  Any point lying inside the narrow mouth and outside the channel can be a potential cap for the channel. 
\changedB{We argue below that a vertex outside the wide mouth does not help reduce the flip distance.} 

\changedA{
We now describe the triangulated vertex gadgets.  See Figures~\ref{degree-two} and~\ref{degree-three}.
Each of the 2 or 3 channels  attached to the vertex gadget will have one potential cap.
We place a convex quadrilateral $CDEF$ with diagonal $CE$, called the \emph{lock}, that separates each channel from its  potential cap.  Thus the lock $CE$ must be flipped, or ``unlocked'', in order to cap any channel.
}

\ignore{
Given the straight line embedding $P'$ of a graph $G'$ with no sharp vertices and maximum degree 3, we first create its two copies. Every construction that follows is repeated in the two copies thus giving us two triangulations $T_1$ and $T_2$ of a polygonal region. In the end, we make the channels in $T_1$ left-inclined and the channels in $T_2$ right-inclined.
\changed{Note that if a minimum vertex cover contains a degree-1 vertex, we can obtain another vertex cover of the same size by replacing the degree-1 vertex with its neighbor. Also, if a vertex cover does not contain any degree-1 vertices, then it must contain the neighbors of all degree-1 vertices. Thus, we begin our construction by removing all degree-1 vertices, their neighbors and the incident edges. The size of the minimum vertex cover of the resulting graph will remain the same.} 
\ignore{We begin by removing \fix{(we can't just remove them)} all degree-1 vertices because they can never be a part of VC$(G')$.}
Next, we place a channel aligned with each edge in the two copies. 
}

For the degree-2 gadget (see Figure~\ref{degree-two}), 
\changedB{
place point $C$ in the smaller angular sector (of angle $< \pi$) between the two channels, so that $C$ is outside the wide mouths of both channels.  Place points $D$, $E$, and $F$ in the other angular sector, with 
$D$ inside channel 1's narrow mouth and outside channel 2's wide mouth, 
$E$ outside the wide mouth of both channels, 
and 
$F$ inside channel 2's narrow mouth and outside channel 1's wide mouth.}
Triangulate as shown. \changedA{Thus $D$ is a potential cap for channel 1 and $F$ is a potential cap for channel 2.}
\ignore{
\fix{Should explain that $C$ is in the $< \pi$ sector}
Note that this construction is possible because the two edges make an angle at most $\pi -O(p(n))$ with each other.}

For the degree-3 gadget (see Figure~\ref{degree-three}), \changedB{note that because the vertex is not sharp, the mouth of each channel exits between the other two channels.  We place vertices in the angular sectors as shown in the figure.}
Place $D$ inside the intersection of the narrow mouths of channels 1 and 2,  and outside the wide mouth of channel 3. Place $F$ inside channel 3's narrow mouth and outside channel 1 and 2's wide mouths. Place $C$ and $E$ outside the wide mouths of all the channels and triangulate as shown. \changedA{Thus $D$ is a potential cap for both channel 1 and 2 and $F$ is a potential cap for channel 3.}
\ignore{ 
It is important for our construction that $DE$ crosses the narrow mouth of only channel 3, $EF$ crosses the narrow mouth of only channel 1 and $FC$ crosses the narrow mouth of only channel 2. This is possible because the vertex is not sharp.}

\changedB{Observe that every channel is blocked from its unique potential cap by exactly 3 edges.  (For example, in Figure~\ref{degree-three}, channel 1 is separated from its potential cap $D$ by edges $FA$, $FE$, and $CE$.)
Observe furthermore that  for each vertex gadget, the sets of blocking edges of the channels have one edge in common, namely the locking edge $CE$, and are otherwise disjoint.
These properties are crucial for correctness.}

We will say that a vertex gadget is \emph{locked} if the diagonal $CE$ exists and \emph{unlocked} otherwise. 
\changedD{
We first show what is possible with unlocked vertex gadgets.
}
\begin{property}
\label{prop:unlocked-channel-upper-bound}
If we unlock a vertex gadget then, for each channel attached to it, there is a sequence of 28 flips that transforms the channel triangulation and returns the vertex gadget to its (unlocked) state.
\end{property}
\begin{proof}
We first claim that there is a 2-flip sequence that caps the channel.   We enumerate the possibilities (refer to Figure~\ref{vertex-gadget}).
Note that we handle channels one at a time, not simultaneously.
 For the degree-2 gadget: flip $CF$ followed by \changedB{$CA$} for channel 1; flip $CD$ followed by \changedB{$CB'$} for channel 2. 
For the degree-3 gadget: flip $FE$ followed by \changedB{$FA$} for channel 1;  flip $CF$ followed by $CA'$ for channel 2; flip $ED$ followed by \changedB{$EA''$} for channel 3.
Once the channel is capped, we can transform the left-inclined triangulation to the right-inclined triangulation in 24 flips by Property~\ref{prop:capped-channel}.  Then we undo the 2 flips that capped the channel.  The total number of flips is 28. 
\end{proof}


\changedD{
Next we show some lower bounds on the number of flips.  First we note that the proof of Property~\ref{prop:channel} carries over to the following:

\begin{property}
\label{prop:locked-channel}
Transforming a left-inclined triangulation of a channel to a right-inclined triangulation takes at least 36 flips
even in the presence of other vertices, so long as 
the other vertices lie outside the wide mouths at either end of the channel.
\end{property}
}

\ignore{
\begin{property}
\label{prop:locked-channel}
If the vertex gadgets at both ends of an edge channel are locked then 
it takes at least 36 flips to transform the channel triangulation from left-inclined to right-inclined. 
\end{property}
\begin{proof}
Because both vertex gadgets are locked, the potential cap vertices are not useable.  It suffices to prove that the flip distance between  a left inclined and a right inclined triangulation of a channel is at least 36 even in the presence of other vertices, so long as none of the other vertices lies inside the wide mouth at either end of the channel.
The proof of 
Property~\ref{prop:channel} carries over.
}
\ignore{
The upper bound is clear.  We must prove the lower bound.  Because both vertex gadgets are locked, the potential cap vertices are not useable.  It suffices to prove that the flip distance between  a left inclined and a right inclined triangulation of a channel is at least 36 even in the presence of other vertices, so long as none of the other vertices lies inside the wide mouth at either end of the channel.
We follow the same idea as in the proof of 
Property~\ref{prop:channel}, and again refer to upper/lower, above/below as in Figure~\ref{edge-gadget}.  In any triangulation, each edge of the upper reflex chain is in some triangle, and the apex of the triangle must be on the lower chain or below the wide mouth at one side or the other of the channel.   
Similarly, each edge of the lower reflex chain is in a triangle, and the apex of the triangle must be on the upper chain or above the wide mouth at one side of the other of the channel.  
We assign the number 0 to the former and 1 to the latter.  Note that all the triangles cross the horizontal line through the middle of the channel, so we can construct a binary string as before.  Even though the triangles need no longer be contiguous, it is still the case that a flip that alters the binary string can only change a `01' to a `10' or vice versa.  Thus the lower bound of 36 flips still holds.
\end{proof}
} 

\ignore{ 
\begin{property}
\label{prop:channel-special}
If we add a triangle to the end of a channel with its 
apex outside the wide mouth of the channel, then 
the flip distance between the left-inclined and right-inclined triangulations of the channel is still 36.
\ignore{In Figure~\ref{fedge-gadget-init}, if the cap vertex $O$ is placed in a way such that it sees all vertices $B_1, \ldots , B_7$ on the lower chain (or $A_1, \ldots , A_7$ on the upper chain), but only $A_1$ on the upper chain (or only $B_1$ on the lower chain) (for example, the way $C$ and $D$ are placed with respect to the channel in Figure~\ref{degree-one}), even then transforming the channel from lef-inclined to right-inclined or vice-versa takes at least 36 flips.}
\end{property}
\begin{proof}
\fix{Note that the statement is now stronger (which we need), and the proof needs to be fixed up.}
We can treat $O$ as a part of the upper chain (or the lower chain) and use the same bijection with binary strings as in the proof of Property~\ref{prop:channel}. Thus now we want to transform the string $1000000111111$ to $1111111000000$. The same argument shows that it takes at least 36 flips.
\end{proof}
}

\ignore{   
\begin{property}
In the locked state, no channels attached to the gadget can be capped.
\end{property}
\begin{proof}
Because of the way the gadgets are constructed, $CE$ lies between each channel and the only vertex that lies in its cone. Thus without flipping $CE$, no channel can be capped.
\end{proof}}

\changedD{
We now consider what happens when  we unlock some vertex gadgets.  Let $T_1'$ be the triangulation obtained from $T_1$ by unlocking some vertex gadgets.  Let $T_2'$ be the triangulation obtained from $T_2$ by unlocking the same vertex gadgets.
Let $C$ be the set of channels that have a locked vertex gadget at both ends.   Then:

\begin{property}
\label{prop:unlocked-channel-lower-bound}
The number of flips required to transform $T_1'$ to $T_2'$ is at least $28|E-C| + 36|C|$.
%
\end{property}
\begin{proof}
Consider a  channel of $C$, with a locked vertex gadget at both ends.
The cap vertices of the channel are not useable.  By construction, the other vertices are outside the wide mouths of the channel.  Therefore, by Property~\ref{prop:locked-channel}, we need 36 flips to transform it.

Consider the  channels with an unlocked vertex gadget at one end.  
We only save flips by capping the channel.  To do this, we must flip the two edges that block the channel from its cap.  Because the edges that block one channel are disjoint from the edges that block any other channel, we must do two flips per channel, and we must re-flip those edges to return to the original state.
Finally, by Property~\ref{prop:capped-channel} it takes at least 24 flips to transform a capped channel.   (Note that the proof of  Property~\ref{prop:capped-channel} carries over even if the channel is capped at both ends.)  The total number of flips is 28 per channel.
\ignore{
Let $O$ and $O'$ be the two cap vertices. Let us assume that we cap the channel at at least one end and we flip at least one of $A_1B_1$ and $A_7B_7$ at some point, or otherwise the lower bound of 36 will apply. Capping and uncapping at one end, plus flipping one of $A_1B_1$ and $A_7B_7$ and flipping it back is 6 flips in total. We show that we must flip the remaining edges $A_1B_2,\ldots , A_1B_7$ and $B_7A_2, \ldots , B_7A_6$ (11 edges in total) at least twice, thus giving us a lower bound of $2\times 11+6 = 28$.

If an edge out of the 11 above is ever flipped to an edge that has either $O$ or $O'$ as its endpoint, then it will have to be flipped at least once more because no edges in the right-inclined triangulation end at $O$ or $O'$. An edge that never touches $O$ or $O'$ must still be flipped at least once because none of the 11 edges are common between the left-inclined and the right-inclined triangulations. Consider the first time it is flipped. The only edge that can be flipped for the first time without ending up touching $O$ or $O'$ is $A_1B_7$. After flipping, it becomes $A_2B_6$, which is not present in the right-inclined triangulation either and thus will have to be flipped again. Thus all 11 edges will have to be flipped at least twice.
}
\end{proof}
}

\ignore{
\begin{property}
\label{property-cap}
After unlocking a gadget from the state shown in Figure~\ref{vertex-gadget}, for each channel attached to it, there exists a 2-flip sequence that caps it.
\end{property}
\begin{proof} We enumerate the possibilities.  Note that we are not capping channels simultaneously.
For the degree-2 gadget: flip $CF$ followed by \changedB{$CA$} for channel 1; flip $CD$ followed by \changedB{$CB'$} for channel 2. 
For the degree-3 gadget: flip $FE$ followed by \changedB{$FA$} for channel 1;  flip $CF$ followed by $CA'$ for channel 2; flip $ED$ followed by \changedB{$EA''$} for channel 3.
\end{proof}}

\ignore{
\begin{property}
Given that we want to transform one or more of the channels incident to a gadget by capping them using the gadget while leaving the gadget in its original state in the end, the best way to do it is the following: first unlock the gadget, then for each channel to be transformed, cap it using its 2-flip sequence, transform it and then uncap by reversing the 2-flip sequence.
\end{property}
\begin{proof}
The only thing to check is that performing the 2-flip sequence for a channel does not help cap any other channel. This is trivial to see because the other channels still have two diagonals between them and their respective potential cap vertex.
\end{proof}}

\begin{figure}[htb]
\centering
\subfigure[Degree 2]{
\includegraphics[scale=1.3]{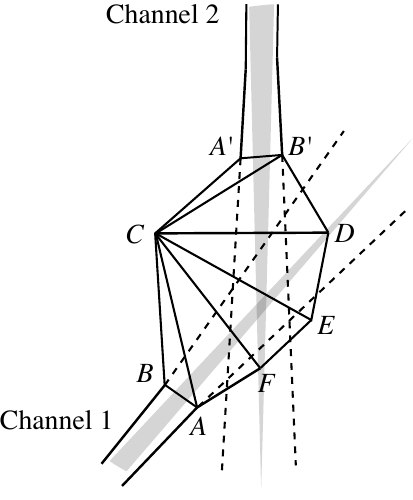}
\label{degree-two}
}
\subfigure[Degree 3]{
\includegraphics[scale=1.3]{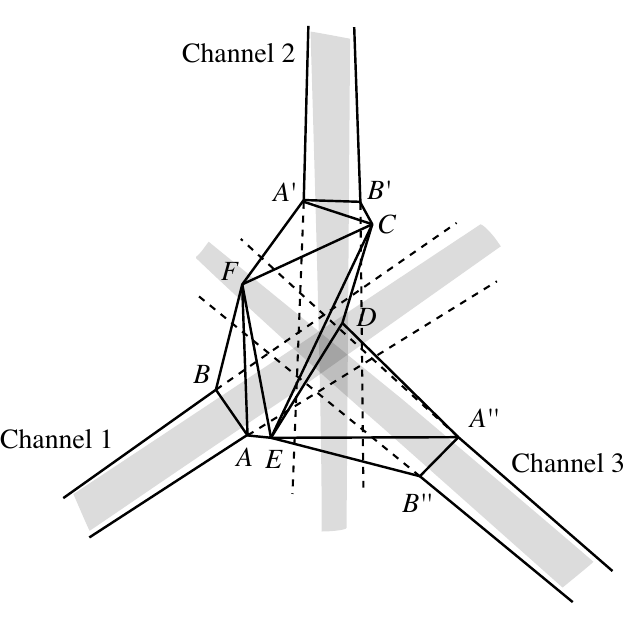}
\label{degree-three}
}

\caption{Gadgets for vertices}
\label{vertex-gadget}
\end{figure}

\subsection{Putting it all together}

\changedB{
\begin{lemma}
\label{main-lemma}
$G$ has a vertex cover of size $\le k$ if and only if the flip distance between the two triangulations $T_1$ and $T_2$ is $\le 2k + 28 |E|$.
\end{lemma}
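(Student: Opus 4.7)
The plan is to prove the forward implication by explicit construction and the reverse implication by a charging argument that splits the flips of $\sigma$ into $CE$ flips and per-channel flips.

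For the forward direction, given a vertex cover $S$ of $G$ with $|S| \le k$, I would build a flip sequence as follows. First unlock every gadget corresponding to a vertex of $S$ by flipping its $CE$ edge, using $|S|$ flips. Then, for each channel $e \in E$, pick an endpoint of $e$ in $S$ (which exists because $S$ is a cover) and apply the $28$-flip sequence guaranteed by Property~\ref{prop:unlocked-channel-upper-bound} at that unlocked gadget; this contributes $28|E|$ flips in total and leaves each involved gadget in its unlocked state. Finally re-lock each gadget of $S$ with one more $CE$ flip each. The overall count is $2|S| + 28|E| \le 2k + 28|E|$.

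For the reverse direction, suppose $\sigma$ is a flip sequence of length at most $2k + 28|E|$ from $T_1$ to $T_2$, and let $S \subseteq V$ be the set of gadgets whose locking edge $CE$ is flipped at some point during $\sigma$. Since every gadget starts and ends locked, each $v \in S$ accounts for at least two $CE$ flips, so $\sigma$ contains $\ge 2|S|$ flips of this kind. Partition the channels into $C := \{e \in E : \text{both endpoints of } e \text{ lie in } V \setminus S\}$ and $E \setminus C$. For a channel in $C$ the locking edges at both ends are present throughout $\sigma$, so the unique potential cap is blocked forever and every other vertex lies outside the wide mouths by construction; Property~\ref{prop:locked-channel} then gives at least $36$ non-$CE$ flips for such a channel. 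For a channel in $E \setminus C$ I would mirror the case analysis inside the proof of Property~\ref{prop:unlocked-channel-lower-bound}: either the channel is never capped during $\sigma$, in which case Property~\ref{prop:locked-channel} again gives $36$ non-$CE$ flips, or it is capped at some point, in which case it needs $\ge 2$ flips of its two non-$CE$ blocking edges to cap, $\ge 2$ more to uncap, and $\ge 24$ flips between them for a capped transformation by Property~\ref{prop:capped-channel}, for a total of $\ge 28$ non-$CE$ flips. The non-$CE$ blocking edges of different channels (at the same gadget or different gadgets) are disjoint, so these per-channel lower bounds may be summed:
\[
|\sigma| \;\ge\; 2|S| + 28(|E|-|C|) + 36|C| \;=\; 2|S| + 28|E| + 8|C|.
\]
Combining with $|\sigma| \le 2k + 28|E|$ yields $|S| + 4|C| \le k$. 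Taking $S$ together with one endpoint of each channel of $C$ produces a vertex cover of $G$ of size $|S| + |C| \le |S| + 4|C| \le k$.

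The step I expect to be most delicate is the bookkeeping in the reverse direction: cleanly separating $CE$ flips from per-channel flips across an arbitrary interleaved sequence $\sigma$, and justifying that the $28$ (resp.\ $36$) lower bound per channel depends only on non-$CE$ flips at that channel. This hinges on the geometric invariants noted just before Property~\ref{prop:unlocked-channel-upper-bound} (the blocking sets of distinct channels share only $CE$, and all extraneous vertices sit outside each wide mouth), so I would want to verify carefully that those invariants really do let Property~\ref{prop:locked-channel} be invoked for channels in $C$ and that Property~\ref{prop:unlocked-channel-lower-bound}'s internal accounting still applies when the sequence actually starts and ends at the fully locked $T_1, T_2$ rather than at $T_1', T_2'$.
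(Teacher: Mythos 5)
Your proposal is correct and follows essentially the same route as the paper: the forward direction is the identical unlock--transform--relock construction, and the reverse direction uses the same decomposition into the unlocked set (your $S$, the paper's $L$) and the uncovered channels $C$, arriving at the same inequality $2|S| + 28|E| + 8|C| \le 2k + 28|E|$. The only difference is that you re-derive the per-channel lower bounds inline and account for the $CE$ flips explicitly, where the paper delegates this to Property~\ref{prop:unlocked-channel-lower-bound}; your extra care about separating $CE$ flips from channel flips is a fair (and slightly more rigorous) reading of what the paper leaves implicit.
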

\begin{proof}
Suppose that $G$ has a vertex cover of size $k$.  
Unlock the corresponding $k$ vertex gadgets.
Each edge channel has an unlocked gadget at one end, so by Property~\ref{prop:unlocked-channel-upper-bound} we can transform between the two triangulations of the channel in 28 flips.
When all channels have been transformed, we relock the $k$ vertex gadgets.
 The total number of flips is $2k + 28|E|$.

For the other direction, suppose that there is a flip sequence between $T_1$ and $T_2$ of length  $\le  2k + 28 |E|$.   
Let $L$ be the set of vertices whose gadgets are unlocked in the flip sequence.  
Let $C$ be the set of edges not covered by vertex set $L$.  
By adding one vertex to cover each edge of $C$, we observe that 
$G$ has a vertex cover of size $|L| + |C|$.  Thus it suffices to show that $|L| + |C| \le k$.
By Property~\ref{prop:unlocked-channel-lower-bound} the number of flips is at least
$2|L| + 36|C| + 28(|E - C|) \ge  2|L| +  28|E| + 8|C|$.  By assumption, the number of flips was $\le 2k + 28 |E|$.  Therefore $2|L| + 8|C| \le 2k$, which implies that $|L| + |C| \le k$, as required.
\end{proof}
}

\ignore{
Our proof relies on the following crucial lemma.
\begin{lemma}
\label{main-lemma}
Let $d(T_1, T_2)$ be the flip distance between triangulations $T_1$ and $T_2$. Also, let $\mbox{VC}(G)$ be the size of the minimum vertex cover of the graph $G=(V, E)$. Then,
\[
d(T_1, T_2) = 4|\mbox{VC}(G)| + 28|E|.
\]
\end{lemma}
\begin{proof}
Given a minimum vertex cover, we unlock each vertex in the vertex cover once, then change each channel from left-inclined to right-inclined using the cap at one of its ends. This takes, for each channel, 2 flips in the vertex gadget to create the cap, 24 flips to transform the channel from left-inclined to right-inclined and finally 2 more flips to get the vertex gadget back to its initial triangulation. Finally, we lock back each vertex gadget in the vertex cover. Total number of flips performed is $4|\mbox{VC}(G)| + 28|E|$. Thus,
\[
d(T_1, T_2) \leq 4|\mbox{VC}(G)| + 28|E|.
\]

Next, suppose that in the shortest flip sequence, we pick a subset $E'\subset E$ of the edges of $G$ and flip the channels corresponding to the edges in $E'$ without capping them. Let $|E'| = x$. Note that the best way to flip the rest of the channels is to find the minimum vertex cover of the remaining graph $G\backslash E'$ and flip according to the procedure described above. Thus a candidate for the fastest flip sequence that involves flipping $x$ of the channels without capping them has $4|\mbox{VC}(G\backslash E')|+28(|E|-x)+36x = 4|\mbox{VC}(G\backslash E')|+28|E|+8x$ flips. But removing $x$ edges can reduce the size of the vertex cover by at most $x$. Thus,
\begin{eqnarray}
d(T_1, T_2) \geq 4(|\mbox{VC}(G)|-x) + 28|E|+8x \nonumber\\
= 4|\mbox{VC}(G)| + 28|E|+4x \nonumber\\
\geq 4|\mbox{VC}(G)| + 28|E|. \nonumber
\end{eqnarray}
This concludes the proof.
\end{proof}
}

The last ingredient of the NP-completeness proof is to show that the reduction takes polynomial time.  We need the following claim.

\begin{claim}
\label{claim:poly-size}
The size of the coordinates used in the construction is bounded by a polynomial in $n$.
\end{claim}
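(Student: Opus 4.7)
The plan is to walk through each phase of the construction and bound the bit-complexity of the coordinates introduced at that phase. The starting point is the strictly convex drawing $\Gamma$ of $G$ on a $\text{poly}(n) \times \text{poly}(n)$ grid produced by Rote's algorithm; the coordinates here are integers of polynomial magnitude by construction. The sharp-vertex elimination step adds a constant number of new vertices near each outer-face vertex of $\Gamma$, whose coordinates can be chosen as small rational offsets of the original grid points and so retain polynomial bit-size.

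Next I would handle the channels. For each edge $(u,v)$ of the modified drawing, the 14 reflex-chain vertices $A_1, \dots, A_7, B_1, \dots, B_7$ are placed along two chains lying just above and just below the segment $uv$. Because $u$ and $v$ are polynomial-size integer points, the direction of $uv$ is described by polynomial-size rationals, and a reflex chain obtained by offsetting $uv$ by $1/\text{poly}(n)$ and bending slightly toward the segment can be specified with polynomially many bits per vertex. After a single global rescaling by a sufficiently large polynomial factor these coordinates become integers of polynomial magnitude.

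The main obstacle is the vertex gadgets, where the points $C, D, E, F$ must satisfy intricate visibility conditions with respect to the narrow and wide mouths of the two or three incident channels. Here I would exploit the fact that at every vertex of the drawing the angular gap between consecutive incident edges, and between each incident edge and the straight angle $\pi$, is at least $1/\text{poly}(n)$; this follows from the drawing lying on a polynomial grid and having no sharp vertices (standard angular resolution bounds for grid drawings). This lower bound on the angles translates into each of the feasible angular sectors for $C, D, E, F$ containing a disk of radius $1/\text{poly}(n)$, and inside such a disk one can select a rational point with polynomial bit-complexity satisfying all the required narrow-mouth/wide-mouth inequalities (each of which is a strict linear inequality with polynomial-size coefficients). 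Combining the bounds from all phases and rescaling once more to clear denominators yields integer coordinates of polynomial size for the entire polygonal region and both triangulations, which proves the claim and confirms that the reduction runs in polynomial time.
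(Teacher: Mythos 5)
Your overall strategy---bound the bit-complexity phase by phase, using the polynomial grid and LP-type arguments---is in the right spirit, but there is a genuine gap at the crucial step, and it stems from the order in which you fix the geometry. You place the fourteen reflex-chain vertices of every channel \emph{first} (offsetting $uv$ and ``bending slightly''), and only afterwards try to place $C,D,E,F$. At that point the feasible region for, say, $D$ in a degree-3 gadget is not an angular sector at the vertex $v$: it is the intersection of the \emph{narrow mouths} of channels 1 and 2 minus the \emph{wide mouth} of channel 3. The narrow mouth is a bounded, roughly triangular region delimited by the extensions of the \emph{last} chain edges $A_6A_7$ and $B_6B_7$, so it is strictly thinner than the channel itself and closes up at a finite distance from the channel end determined by how much you bent the chains; the wide mouth is delimited by the extensions of the \emph{first} chain edges. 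Whether these regions even intersect the gadget area, let alone contain a disk of radius $1/\mathrm{poly}(n)$, depends delicately on the channel widths and on the bending you have already committed to. Your appeal to ``standard angular resolution bounds'' bounds angles at $v$ between edges of $\Gamma$, but says nothing about these mouth regions, so the non-emptiness and fatness of the feasible regions --- the heart of the claim --- is asserted rather than proved.

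The paper's proof is engineered precisely to avoid having to prove that statement. It first draws each channel with \emph{straight, collinear} chains, so that the narrow mouth equals the wide mouth and is just the straight extension of the channel; it then chooses the channel endpoints (hence widths) by an explicit rule on the boundary of a small square around each vertex, guaranteeing that all pairwise chain intersections, and hence all feasible regions, lie inside that square; it then places $C,D,E,F$ by solving linear programs with polynomial-size coefficients; and only at the very end does it perturb the chain vertices to make the chains reflex, subject to linear constraints that keep every already-placed point on the correct side of every narrow and wide mouth. If you want to keep your ordering, you would need to prove quantitatively that your pre-bent channels leave feasible regions containing $1/\mathrm{poly}(n)$-disks, which is exactly the hard part; reversing the order as the paper does reduces everything to feasibility of explicit linear systems.
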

\begin{proof}
We give the main idea here, with further details in the appendix.  
We begin with a straight line drawing of a graph on a polynomial size grid.  Expand the grid, and allocate a square region around each vertex for the vertex gadget.  Expand each edge to two parallel line segments.  These line segments will become the channel, but for now, the reflex vertices of the channel are all collinear, which means that the channel's wide mouth is equal to its narrow mouth. 
The points $C,D,E,F$ of the  vertex gadget go in \emph{feasible} regions defined by the wide and narrow mouths (e.g.~in the 3-channel gadget, point $D$ lies in the narrow mouth of channels 1 and 2, but outside the wide mouth of channel 3).  We make the channels narrow enough so that all the feasible regions intersect the region allocated to the gadget.   
We claim that we can choose the channel end points $A, B, A', B', A'', B''$ on the expanded grid so that the resulting channels satisfy this property.

Now we pick points $C,D, E, F$ inside the appropriate regions.  Because the boundaries of the feasible regions are determined by pairs of points on the expanded grid,  the new points can be chosen to have polynomial size (because solutions to linear programs have polynomial size as shown in Theorem 10.1 of~\cite{Sch86}).  Finally we place the reflex points of each channel.  The feasible region wherein each set of reflex points can be placed is bounded by lines through pairs of points already placed.  Thus, we can choose reflex points of polynomial size.  
\end{proof}

\ignore{
Now, given $G$, we can construct $T_1$ and $T_2$ in polynomial time, with all coordinates polynomial in the size of the original input. We can also compute $|E|$ in polynomial time. From Lemma~\ref{main-lemma}, $|\mbox{VC}(G)| \leq k$ if and only if $d(T_1, T_2)\leq 4k+28|E|$, which can be decided in polynomial time. This concludes our reduction.}

\section{Triangulations of point-sets}
We prove the NP-hardness of computing the flip distance between two triangulations of a point set by 
reducing from computing the flip distance between two  triangulations of a polygonal region. Given two triangulations $T_1$ and $T_2$ of a polygonal region $R$, we triangulate all the holes and pockets of $R$ the same way in both triangulations. 
Next, we repeat each edge on the boundary of the holes and pockets 
many times (as shown in Figure~\ref{fig:repeat-edges}) so that dismantling a boundary edge requires a large number of flips.  This gives two triangulations of a point set such that the flip distance between the two triangulations is the same as the flip distance between the original $T_1$ and $T_2$.
\begin{theorem}
\changed{The following problem is NP-complete:  Given two triangulations of a point set in the plane, and a number $k$, is the flip distance between the triangulations at most $k$?}
\end{theorem}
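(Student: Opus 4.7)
The plan is to reduce in polynomial time from the polygonal-region flip distance problem shown NP-complete in Theorem 1; membership in NP is immediate, since a flip sequence of length $\le k$ is a succinct witness. Given an instance $(R, T_1, T_2, k)$, I would construct a point set $P$ and two triangulations $T_1', T_2'$ of $P$ such that the flip distance between $T_1'$ and $T_2'$ is exactly the flip distance between $T_1$ and $T_2$; then $d(T_1, T_2) \le k$ iff $d(T_1', T_2') \le k$.

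The construction has two steps. First, triangulate each hole of $R$ and each pocket between $R$ and the convex hull of its vertex set in the same way in both $T_1$ and $T_2$, obtaining two triangulations of the convex hull of $R$'s vertices that agree outside $R$ and differ only inside $R$. Second, to block ``shortcuts'' through holes or pockets, subdivide each edge on the outer boundary of $R$ and on each hole boundary by placing $M$ additional points near it, for a sufficiently large polynomial $M$ in the input size, as depicted in Figure~\ref{fig:repeat-edges}. In both $T_1'$ and $T_2'$, the subdivided edge is replaced by an identical strip of small triangles hugging the boundary, so that the two triangulations still agree outside $R$ and differ only inside $R$.

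Correctness requires two directions. The upper bound is immediate: any flip sequence $T_1 \to T_2$ of length $\ell$ lifts verbatim to a flip sequence $T_1' \to T_2'$ of the same length, since the boundary strips and the hole/pocket triangulations coincide in $T_1'$ and $T_2'$ and need not be touched. For the lower bound, the key claim is that in any optimal flip sequence $T_1' \to T_2'$ no edge of any boundary strip is ever flipped: disturbing a strip forces $\Omega(M)$ additional flips to restore it to its final configuration, which, for $M$ chosen larger than any achievable saving, cannot be profitable. Once every strip is preserved throughout the sequence, all flips occur strictly inside $R$, and the sequence descends to a valid flip sequence $T_1 \to T_2$ of the same length.

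The main obstacle is formalising the lower bound. I would proceed by an exchange argument: given any flip sequence that flips a boundary-strip edge at some step, locate the first such flip, bound from below by $\Omega(M)$ the number of subsequent flips required to restore the strip to the configuration it must have in $T_2'$, and use this to replace the offending segment with a shorter or equal-length segment that preserves the strip. Iterating the exchange yields an optimal flip sequence that respects every strip, from which the desired polygonal-region flip sequence is extracted by restricting to the interior of $R$. Choosing $M$ larger than the trivial upper bound $O(n^2)$ on the flip distance (via Lawson~\cite{Law77}) makes any boundary-disrupting detour strictly worse than the boundary-respecting alternative, completing the reduction.
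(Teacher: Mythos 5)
Your proposal is correct and follows essentially the same route as the paper: reduce from the polygonal-region problem, triangulate holes and pockets identically in both triangulations, and place many extra points along each boundary edge so that disturbing the boundary costs more flips than it could ever save. The paper's own proof is only a sketch, and your additional detail (choosing $M$ above the $O(n^2)$ flip-distance upper bound and the exchange argument for the lower bound) fills in exactly the gaps the paper leaves implicit.
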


\begin{figure}
\centering
\includegraphics[scale=.8]{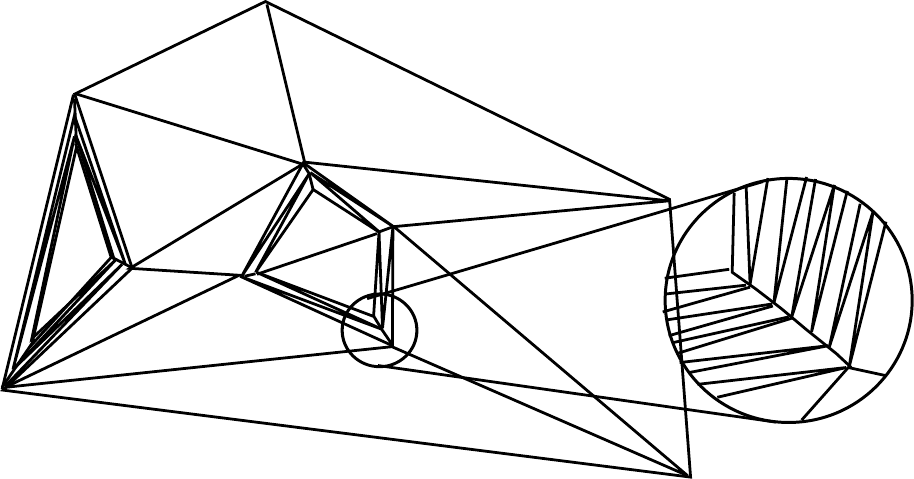}
\caption{Repeating edges on the boundary of holes.}
\label{fig:repeat-edges}
\end{figure}

\section{Conclusion}
We have shown that it is NP-complete to compute the flip distance for triangulations of a polygonal region, or  a point set. The problem remains open for a convex polygon, or a simple polygon, and also for more combinatorial objects such as labelled and unlabelled maximal planar graphs. 

\bigskip\noindent{\bf Acknowledgements.}
We thank Therese Biedl for helpful suggestions.

\bibliographystyle{abbrv}

\bibliography{flipping}

\newpage
\appendix
\section{Proof of Claim~\ref{claim:poly-size}}
As mentioned in Section~\ref{sec:details}, our proof consists of three steps:
\begin{enumerate}
\item Draw the channels with straight and parallel chains so that the feasible regions are non-empty and contained in a small square surrounding the original vertex. Moreover, the points $A, B, A', B', A'', B''$ get polynomial sized coordinates.
\item Obtain a point with polynomial sized coordinates inside each feasible region.
\item Make all the chains reflex.
\end{enumerate}

For the first part, we begin by placing a square of side $c$ around each vertex $v$ with $v$ at its center (Figure~\ref{fig:constraints}), such that $c$ is at least a constant factor smaller than the smallest edge in the drawing $\Gamma$ and can be written with polynomial number of bits. If an edge passes through a corner of the square, then we slightly increase one of its sides. Our aim is to find the points $A, B, A', B', A'', B''$ on the boundary of the square. (Note that even though the edge $AB$ will not be orthogonal to the two chains of the channel, the properties of the channel that we proved in Sections~\ref{sec:proof-idea} and \ref{sec:details} will still hold.)

The edges and their extensions (the dotted lines in Figure~\ref{fig:constraints}) intersect the square at points whose coordinates are polynomial size. Let $S$ be the set of intersection points and the corners of the square. For the edge corresponding to channel 1, consider the point $p_1$ where it intersects the square and find the point $p$ other than itself in $S$ that lies on the same edge of the square and is closest to it. Setting $A$ to be the point on the boundary of the square a distance $pp_1/3$ away from $p_1$ towards $p$ and $B$ the symmetric point on the opposite side determines the channel and its width. Do the same thing at the other end of the edge corresponding to channel 1 and obtain another width. Finally, pick the narrower of the two options for channel 1. Since $A$ and $B$ lie on the edge of the square and their distance to $p_1$ is polynomial, we need polynomial number of bits to express the coordinates of $A$ and $B$ as well. Repeat the above for $A', B', A''$ and $B''$. 

Since all the possible intersection points between the upper and lower chains of the channels occur inside the square, all the feasible regions have non-empty intersections with the interior of the square.

\begin{figure}
\centering
\subfigure[Degree-2]{
\includegraphics[scale=.85]{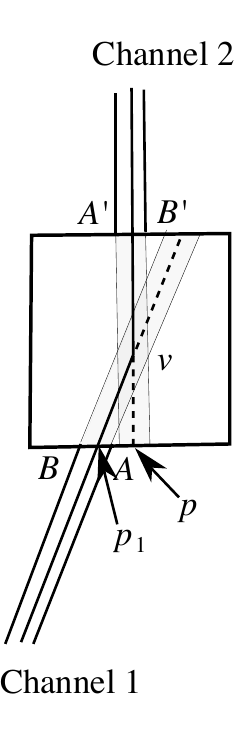}
\label{fig:constraints-two}
}
\subfigure[Degree-3]{
\includegraphics[scale=.85]{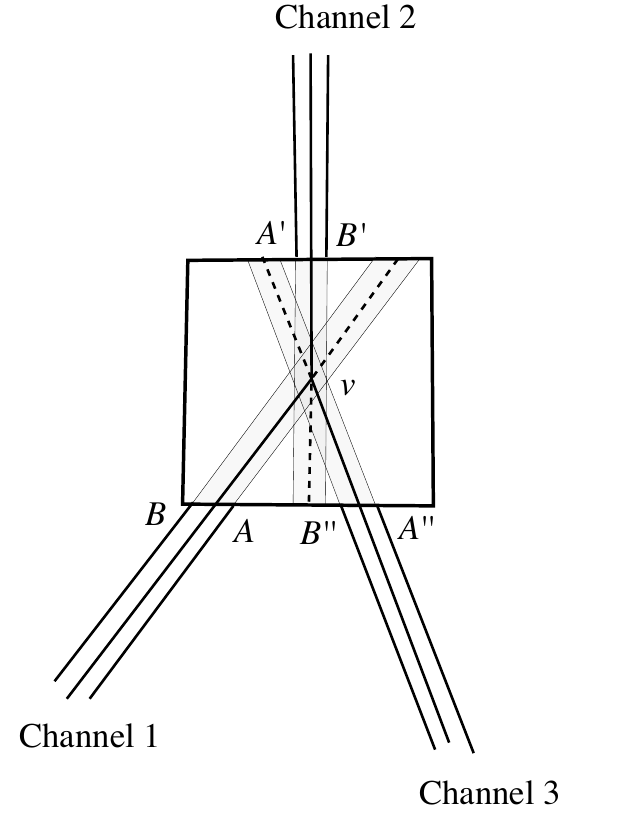}
\label{fig:constraints-three}
}
\caption{Constraints for vertex gadgets.}
\label{fig:constraints}
\end{figure}

Now, since the feasible regions are non-empty and are defined by linear inequalities with polynomial sized coefficients, using the theory of linear programming, we find a point with polynomial size coordinates inside each feasible region. Finally, to make the chains reflex, we find a location for each point on it one by one. Each point on the chain has a feasible region now defined by two kinds of constraints: 1) for every point outside the channel, if the point was inside the narrow mouth, it should remain inside and if it was outside the wide mouth, it should remain outside and 2) the new location of the point should maintain the reflexivity of the chain. These two constraints are also linear and thus we can find polynomial size coordinates for each point on the chains. This concludes our proof.

\end{document}